\def\N{{\mathbb N}}
\newcommand{\Bn}{{\mathcal{B}_n}}
\newcommand{\n}{$n$}
\newcommand{\cutn}{\ensuremath{\operatorname{CUT}(n)}}
\newcommand{\ind}{\mathds{1}}
\begin{document}

\begin{verbatim}\end{verbatim}\vspace{2.5cm}

\begin{frontmatter}

\title{Cut polytope has vertices on a line}

\author{Nevena Mari\'c 
}
\address{Department of Mathematics and Computer Science\\ University of Missouri - St. Louis\\ St. Louis, USA}

   \thanks[myemail]{Email:
   \href{mailto:maric@math.umsl.edu} {\texttt{\normalshape
   maric@math.umsl.edu}}} 

\begin{abstract}

The cut polytope ${\rm CUT}(n)$ is the convex hull of
  the cut vectors in a complete graph with vertex set $\{1,\ldots,n\}$. It is well known in the area of combinatorial optimization and recently has also been studied in a direct relation with admissible correlations of symmetric Bernoulli random variables. That probabilistic interpretation is a starting point of this work in conjunction with a natural binary encoding of the CUT($n$). We show that for any $n$, with appropriate scaling, all encoded vertices of the polytope ${\mathbf 1}$-CUT($n$) are approximately on the line $y= x-1/2$.
  
%

\end{abstract}

\begin{keyword}
Cut polytope, Bernoulli correlations.
\end{keyword}

\end{frontmatter}

\section{Introduction}\label{intro}

The cut polytope ${\rm CUT}(n)$ is  the convex hull of
   cut vectors in a complete graph with vertex set $\{1,\ldots,n\}$. It is  well known in the area of combinatorial optimization as it can be used to formulate the max-cut problem, which has many applications in various fields, like statistical physics, in relation to spin glasses~\cite{dezalaurent1997}. It has also been studied in relation with correlations of binary random variables.  A symmetric Bernoulli random variable is such that takes values 0 and 1 with equal probabilities. The space of all $n$-variate symmetric Bernoulli r.v. is denoted by $\Bn$ and its correlation space $R(\Bn)$. It s well known that every correlation matrix belongs to $\mathcal{E}_{n}$, the set of  symmetric positive semi-definite matrices with all diagonal elements equal to 1.  For Gaussian marginals, the entirety of $\mathcal{E}_{n}$ can be realized, but surprisingly enough, for other common distributions very little is known \cite{huberm2015}. 
For multivariate symmetric Bernoulli the problem was recently solved  in \cite{huberm2017} where the polytope  $R(\Bn)$ was characterized by identifying
  its vertices. A relationship with the CUT($n$) was established explicitly as $R(\Bn)= {\mathbf{1}}-2~{\rm CUT}(n)$.  
A relation between CUT($n$) and its approximation by $\mathcal{E}_{n}$ has also  been studied in \cite{tropp2018}.
 In this work the established relationship of {\bf 1}-CUT($n$) with $R(\Bn)$ is our starting point. We then use a natural binary encoding to study the vertices of {\bf 1}-CUT($n$) as integers. It is shown that, with appropriate scaling, for any $n$, the encoded vertices of this polytope are approximately on the line $y=x-1/2$. Consequently, the encoded vertices of CUT($n$) are  approximately on the line $y= -x + 2^{n-1}+1/2$.

\section{Cut polytopes}\label{sec:cut}

Let $G = (V,E)$ be a graph with vertex set $V$ and edge set $E$. For $S \subseteq V$
 a {\em cut} of the graph is a partition $(S,S^C)$ of the vertices. The {\em cut-set} consists of all edges that connect a node in $S$ to a node not in $S$. 

Let $V_n=[n]= \{1,\ldots,n\}$, $E_n= \{(i,j); 1 \leq i \neq  j \leq n \}$,  and $K_n=(V_n, E_n)$ be a complete graph with the vertex set $[n]$.
\begin{definition}\label{def:cut}
For every $ S \subseteq [n]$ a vector $\delta(S) \in \{0,1\}^{E_n}$, defined as 

\begin{eqnarray*}
\delta(S)_{ij}= \left\{ \begin{array}{cc}
1, & \mbox{ if }  |S \cap \{i,j\}|=1 \\
0, & \mbox{ otherwise }
  \end{array} \right.
  \end{eqnarray*}
  for $(1 \leq i < j \leq n)$, is called a {\em cut vector} of $K_n$.
  \end{definition}

The {\em cut polytope} CUT(\n) is the convex hull of all cut vectors of $K_n$:
\[
{CUT}(n) = conv\{ \delta(S): S \subseteq [n] \}.
\]

\begin{remark}
Since every cut vector is a vertex of \cutn, there are $2^{n-1}$ vertices 
of this polytope~\cite{ziegler2000}.
\end{remark}
Each $\delta(\cdot)$ is a $0/1$-vector (every coordinate value is either 0 or 1). The convex hulls of finite sets of $0/1$-vectors are called {\em 0/1-polytopes}, out of which cut polytopes are a sub-class. An excellent lecture on 0/1-polytopes, including CUT,  is given in Ziegler~\cite{ziegler2000}. 
A thorough treatment of cut polytopes can be found in Deza and Laurent~\cite{dezalaurent1997}. 
The starting point for us here are results from Huber and Mari\'c \cite{huberm2017} where the cut polytopes are given a new probabilistic interpretation.

\section{Cut polytopes via agreement probabilities}
 
 \begin{definition} \label{deflambda}
 For an $n$-dimensional $0/1$-vector $x$ we define its {\em concurrence vector} as
 $\lambda(x)=(\lambda(x)_{12},\lambda(x)_{13},\ldots,\lambda(x)_{1n},\lambda(x)_{23},\ldots,\lambda(x)_{2n},\ldots,\lambda(x)_{n-1\,n})$ where
 $ \lambda(x)_{ij}= \ind (x(i)=x(j))     $, for $ 1 \leq i < j \leq n$. 
 \end{definition}
 Here $\ind$ denotes the indicator function: $\ind(A) = 1$ if $A$ is true and $0$ otherwise. Applying the definition to an example $x=(0,1,1,0)$, $\lambda(x)=(0,0,1,1,0,0)$. Note that if $x$ has $n$ coordinates then $\lambda(x)$ has $n \choose 2$ coordinates. 
 
 Introduction of the concurrence vector has its motivation from the context of symmetric Bernoulli random variables \cite{huberm2015}. Let $B_1,\ldots, B_n \sim Bern(1/2)$, that is $P(B_i=1)=P(B_i=0)=1/2$, for all $i$. The random vector $(B_1,\ldots,B_n) \in \Bn$ takes values in $\{0,1\}^n$ and correlations among these variables are explicitly related to {\em concurrence} probabilities, i.e. probabilities of two variables taking the same value, $P(B_i =B_j)$, for $i \neq j$. 
 
 Let's look at  elements of $\Bn$ (the set of all $n$-variate symmetric Bernoulli dist.) that are uniformly distributed over two diagonal points of $\{0,1\}^n$, where by a diagonal we mean the set $\{x$, $ {\bf{1}}-x \}$. There are $2^{n-1}$ such distributions and they play an important role for both $\Bn$ and $R(\Bn)$. Namely it was shown in \cite{huberm2017} that  the concurrence vectors associated to those diagonal distributions are precisely vertices of the polytope {\bf 1} - CUT($n$) (obtained by replacing all coordinates $x_i$ by $1-x_i$).  
 
 \subsection{Binary encoding}
 
Let us look now at elements of $\{0,1\}^n$ encoded as a binary representation of numbers $\{0,1, 2, \ldots, 2^{n}-1\}$.  For instance we will identify $(0,1,1,1)$ with  a binary number $0111$, that is decimal number 7.  Note that $00111$ also represents decimal number 7, so when needed we will specify the number of bits used in representation of the specific number. The notation in that case will be $x_{[k]}$,  where $k$ is the number of bits. When it is helpful for easier reading to emphasize that the number is represented in binary, we will add $b$ in superscript, like $x^b_{[k]}$.
{\bf  More notation:} We will write two strings next to each other with vertical dots in between to denote concatenation: if $x = 001$, then $0\vdots x = 0001$. Also $\bar x$ is the complement of $x$. 
 
 What happens with $\lambda$ in this encoding? It becomes a function from $\N$ to $\N$ and in place of vectors we get integers that perhaps follow some interesting law.
  
 Going back to the set $\{0,1,\ldots, 2^{n}-1\}$,  label the upper half of the points by $x_1=2^{n-1}, x_2= 2^{n-1}+1,\ldots, x_{2^{n-1}}=2^n-1$ (all binary).
 Take as an example $n=4$, then $x_1=1000$, $x_2=1001$,..., $x_8=1111$. Using the definition \ref{deflambda} we can easily calculate the concurrence vectors associated to these numbers: $\lambda(1000)=000111,\lambda(1001)=001100,\ldots,\lambda(1111)=111111$.  As mentioned previously, these points (i.e. the associated 0/1 vectors) are vertices of the polytope {\bf 1}-CUT($n$). Obviously, out of $2^{n-1}$ vertices, $v_1,\ldots, v_{2^{n-1}}$ we can directly identify two vertices of the polytope:  $v_1=\lambda(x_1)=2^{n-1}-1$ and $v_{2^{n-1}}=\lambda(x_{2^{n-1}})=2^{n \choose 2}-1$. Actually, for every $k$ evaluating $\lambda(x_k)$ is straightforward but it is interesting to see is there a more general law between these integers.
 
 \begin{remark}
 Note that $\lambda(0\vdots x)= \bar x \vdots \lambda(x)$ and $\lambda(1\vdots x)=  x \vdots \lambda(x)$.
 \end{remark}
 For example $\lambda(0111)=000111=\overline{111} \vdots \lambda(111)$.

 \begin{proposition}
  For $x,y$ written using same number of bits and with 1 as a leading digit, If $x < y$ then $\lambda(x) < \lambda(y).$ 
 \end{proposition}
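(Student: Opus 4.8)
The plan is to leverage the recursive identity from the Remark, namely $\lambda(1 \vdots x) = x \vdots \lambda(x)$, which shows that prepending a leading $1$ copies the tail of the argument directly into the high-order bits of its image. Since both $x$ and $y$ are written with the same number of bits, say $n$, and share a leading digit $1$, I would write $x = 1 \vdots x'$ and $y = 1 \vdots y'$, where $x'$ and $y'$ are the $(n-1)$-bit strings obtained by deleting the leading $1$. As integers $x = 2^{n-1} + x'$ and $y = 2^{n-1} + y'$, so the hypothesis $x < y$ is equivalent to the strict inequality $x' < y'$.

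Applying the Remark gives $\lambda(x) = x' \vdots \lambda(x')$ and $\lambda(y) = y' \vdots \lambda(y')$. The key observation is that the leading $n-1$ bits of $\lambda(x)$ reproduce $x'$ verbatim, while the trailing ${n-1 \choose 2}$ bits hold $\lambda(x')$, and the same holds for $y$. Writing $m = {n-1 \choose 2}$ for the length of the trailing block, this reads $\lambda(x) = x'\cdot 2^{m} + \lambda(x')$ and $\lambda(y) = y'\cdot 2^{m} + \lambda(y')$. Because $\lambda(x')$ occupies only $m$ bits we have $0 \le \lambda(x') \le 2^{m}-1$, and since $x' < y'$ means $x'+1 \le y'$, I would chain the estimates
\[
\lambda(x) = x'\cdot 2^{m} + \lambda(x') \le (x'+1)\cdot 2^{m} - 1 \le y'\cdot 2^{m} - 1 < y'\cdot 2^{m} \le \lambda(y),
\]
which gives $\lambda(x) < \lambda(y)$.

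The single point requiring care --- and the only place the hypotheses enter --- is verifying that the strict gap in the leading blocks cannot be washed out by the trailing blocks; this is exactly the content of the elementary bound $\lambda(x') < 2^{m}$ combined with $x' < y'$. I expect this to be the main (though modest) obstacle, in the sense that it is where the statement could break: the assumption of a shared leading $1$ is precisely what lets me use the $\lambda(1 \vdots \cdot)$ branch of the Remark for both numbers. Were a leading $0$ allowed, the other branch $\lambda(0 \vdots x) = \bar x \vdots \lambda(x)$ would place the complement $\bar{x'}$ in the high-order bits, reversing the order and defeating the conclusion, so the hypothesis cannot be dropped.
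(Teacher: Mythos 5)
Your proof is correct, and it rests on the same two ingredients as the paper's: the identity $\lambda(1\vdots z)=z\vdots\lambda(z)$ from the Remark, and the observation that a strict gap in the leading block cannot be washed out by the trailing block since $\lambda(z)<2^{n-1\choose 2}$. The difference is structural. The paper proves only the successor case: writing $x=1\vdots z$ and $x+1=1\vdots(z+1)$, it argues in concatenation notation that $\lambda(x+1)=(z+1)\vdots\lambda(z+1)>z\vdots 11\cdots 1\geq z\vdots\lambda(z)=\lambda(x)$, and the statement for arbitrary $x<y$ then follows by chaining these inequalities through the intermediate integers --- a transitivity step left implicit, which also silently uses the fact that every integer strictly between $x$ and $y$ again has $n$ bits with leading digit $1$ (true, because such integers form the interval $[2^{n-1},2^n-1]$). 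You instead treat arbitrary $x<y$ in a single step, converting concatenation into explicit arithmetic, $a\vdots b=a\cdot 2^m+b$ with $m={n-1\choose 2}$, and chaining
\[
\lambda(x)=x'\cdot 2^{m}+\lambda(x')\leq (x'+1)\cdot 2^{m}-1\leq y'\cdot 2^{m}-1<y'\cdot 2^{m}\leq\lambda(y).
\]
What your route buys is self-containedness: no induction or appeal to transitivity is needed, and the one place where strictness enters is isolated in an explicit inequality chain; your closing remark about why a leading $0$ would break the argument (the complement $\bar{x}'$ landing in the high bits) is also a point the paper does not make. What the paper's route buys is notational brevity, at the cost of being, as written, slightly less complete than your version.
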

 \begin{proof}
 Suppose $x$ and $x+1$ can be written using same number of bits $n$ and have 1 as a leading digit. Then they can be written as $x=2^{n-1}+z$ and $x+1=2^{n-1}+z+1$  Then, $\lambda(x+1)= \lambda(1 \vdots z+1)= z+ 1 \vdots \lambda(z+1) > z \vdots 11...1 \geq z \vdots \lambda(z)= \lambda(x)$. The strict inequality here is due to the fact that 1 at any position to the left from the $\vdots$ has more weight than all ones at the right side. \end{proof}
 
 This proposition tells us that the sequence $v_1,\ldots, v_{2^{n-1}}$ is increasing. Moreover, with appropriate scaling  they are approximately on the line $y = x -1/2$. This is the statement of the next theorem.
 
 \begin{theorem} \label{thm}
 For $k=1,\ldots,2^{n-1}$ and $v_k= \lambda(x_k)$
 \begin{center}
 $|\frac{v_k} {2^{{n-1} \choose 2} }-(k -1/2)| < 1/2.$
 \end{center}
 That is, $v_k/2^{{n-1} \choose 2}$ are approximately  on the line $y=k-1/2$ with residuals being at most 1/2.
 \end{theorem}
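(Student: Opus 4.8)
The plan is to use the concatenation identity from the Remark to collapse the whole statement onto a bound for a single ${{n-1} \choose 2}$-bit integer. First I would expose the leading digit of each upper-half point. Since $x_k = 2^{n-1} + (k-1)$ with $k-1 \in \{0,\ldots,2^{n-1}-1\}$, the number $x_k$ has leading bit $1$ in its $n$-bit form, so I can write $x_k = 1 \vdots m$, where $m = m_{[n-1]}$ is the $(n-1)$-bit binary representation of the integer $k-1$.

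Applying the Remark's identity $\lambda(1 \vdots m) = m \vdots \lambda(m)$ and then reading the concatenation $m \vdots \lambda(m)$ as an integer, with $\lambda(m)$ occupying the low-order ${{n-1} \choose 2}$ bits and $m$ the high-order bits, gives the explicit formula
\[ v_k \;=\; (k-1)\,2^{{n-1} \choose 2} + \lambda(m), \]
where on the right $\lambda(m)$ now denotes the integer value of that ${{n-1} \choose 2}$-bit string. Dividing through by $2^{{n-1} \choose 2}$ leaves
\[ \frac{v_k}{2^{{n-1} \choose 2}} \;=\; (k-1) + \frac{\lambda(m)}{2^{{n-1} \choose 2}}, \]
so the quantity the theorem asks me to control collapses to $\left|\frac{v_k}{2^{{n-1} \choose 2}} - (k-1/2)\right| = \left|\frac{\lambda(m)}{2^{{n-1} \choose 2}} - \frac12\right|$. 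The rest of the proof is therefore a one-line bound on the single integer $\lambda(m)$.

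Since $\lambda(m)$ is a ${{n-1} \choose 2}$-bit number it satisfies $0 \le \lambda(m) \le 2^{{n-1} \choose 2} - 1$, so $\frac{\lambda(m)}{2^{{n-1} \choose 2}} \in [0,1)$ and the residual is at most $\frac12$. The upper side is in fact automatic and strict: from $\lambda(m) \le 2^{{n-1} \choose 2}-1$ we get $\frac{\lambda(m)}{2^{{n-1} \choose 2}} - \frac12 \le \frac12 - \frac{1}{2^{{n-1} \choose 2}} < \frac12$.

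The main obstacle is the strict lower bound, because the residual equals $\frac12$ exactly when $\lambda(m) = 0$. To get strict inequality I must rule this out, i.e.\ show $\lambda(m) \ge 1$, which is precisely the assertion that the $(n-1)$-bit word $m$ has at least one agreeing pair $m(i)=m(j)$ with $i<j$. Here the pigeonhole principle is the key: as the coordinates take only the two values $0,1$, any word of length $n-1 \ge 3$ must repeat a value, forcing $\lambda(m) \ge 1$ and hence $\frac{\lambda(m)}{2^{{n-1} \choose 2}} \ge \frac{1}{2^{{n-1} \choose 2}} > 0$, which closes the lower side strictly. The one point needing separate care is the small regime $n \le 3$, where the alternating words (for instance $m = 01$) give $\lambda(m)=0$ and the residual meets $\frac12$; I would dispatch these remaining cases by direct inspection.
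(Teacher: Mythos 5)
Your decomposition is exactly the paper's: write $x_k = 1 \vdots m$ with $m$ the $(n-1)$-bit form of $k-1$, apply the remark $\lambda(1\vdots m) = m \vdots \lambda(m)$ to get $v_k = (k-1)2^{\binom{n-1}{2}} + \lambda(m)$, and reduce the residual to $\lambda(m)/2^{\binom{n-1}{2}} - 1/2$. Where you differ is that you push past the point where the paper stops. The paper's proof ends with this identity and the observation $\lambda(m) < 2^{\binom{n-1}{2}}$, declaring that this ``finishes the proof''; but that only yields a residual in $[-1/2,\,1/2)$, i.e.\ the non-strict bound $\le 1/2$, with equality precisely when $\lambda(m)=0$. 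You correctly isolate this as the real issue and supply the missing pigeonhole step: for $n \ge 4$ the word $m$ has length $n-1 \ge 3$, so two of its bits agree, $\lambda(m) \ge 1$, and the strict inequality follows. This is a genuine repair of a gap in the published argument.

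One correction to your final sentence, though: the small cases $n \le 3$ cannot be ``dispatched by direct inspection,'' because inspection shows they are actual counterexamples to the strict inequality as stated. For $n=3$, $k=2$ one has $m = 01$, $\lambda(m)=0$, $v_2 = 2$, and
\begin{equation*}
\left|\frac{v_2}{2^{\binom{2}{2}}} - \left(2 - \tfrac12\right)\right| \;=\; \left|1 - \tfrac32\right| \;=\; \tfrac12,
\end{equation*}
which is not $< 1/2$; similarly for $n=3$, $k=3$ and for $n=2$, $k=1$. So the theorem's strict inequality is simply false in those cases, for you and for the paper alike. The honest conclusion of your argument is: the bound holds with $\le$ for all $n$, and with strict inequality exactly when $n \ge 4$ (or, for small $n$, at those $k$ for which $m$ is not an alternating word). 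Stated that way, your proof is complete and is in fact more careful than the paper's.
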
 
 
 \begin{proof}
 \begin{eqnarray*}
 \lambda(x_{k})= \lambda( 1\vdots (k-1)^b_{[n-1]})= (k-1)^b_{[n-1]} \vdots \lambda((k-1)^b_{[n-1]})\\
 =(k-1) 2^{ n -1 \choose 2}+ \lambda((k-1)^b_{[n-1]})
 \end{eqnarray*}
 Note that $\lambda((k-1)^b_{[n-1]})$ has $n -1  \choose 2$ bits and therefore $\lambda((k-1)^b_{[n-1]})  < 2^{n -1\choose 2}$. Then 
 \begin{eqnarray*}
 \frac{\lambda(x_{k})}{2^{n-1 \choose 2}} - (k- 1/2) = -1/2 +  \lambda((k-1)^b_{[n-1]})/2^{n-1 \choose 2}
 \end{eqnarray*}
 which finishes the proof.
 \end{proof}
 
 In the following figures $v_k/2^{n-1 \choose 2}$ is plotted versus $k=1,\ldots, 2^{n-1}$, for $n=8$ and $n=12$.  The fitted line $y= x - 0.5$ is obtained using linear regression (MATLAB). For $n=12$ the residuals plot is also shawn, and it can be clearly seen that residuals, in absolute value, are bounded by 1/2 as stated in the above theorem.

\begin{center} 
\includegraphics[width=6in]{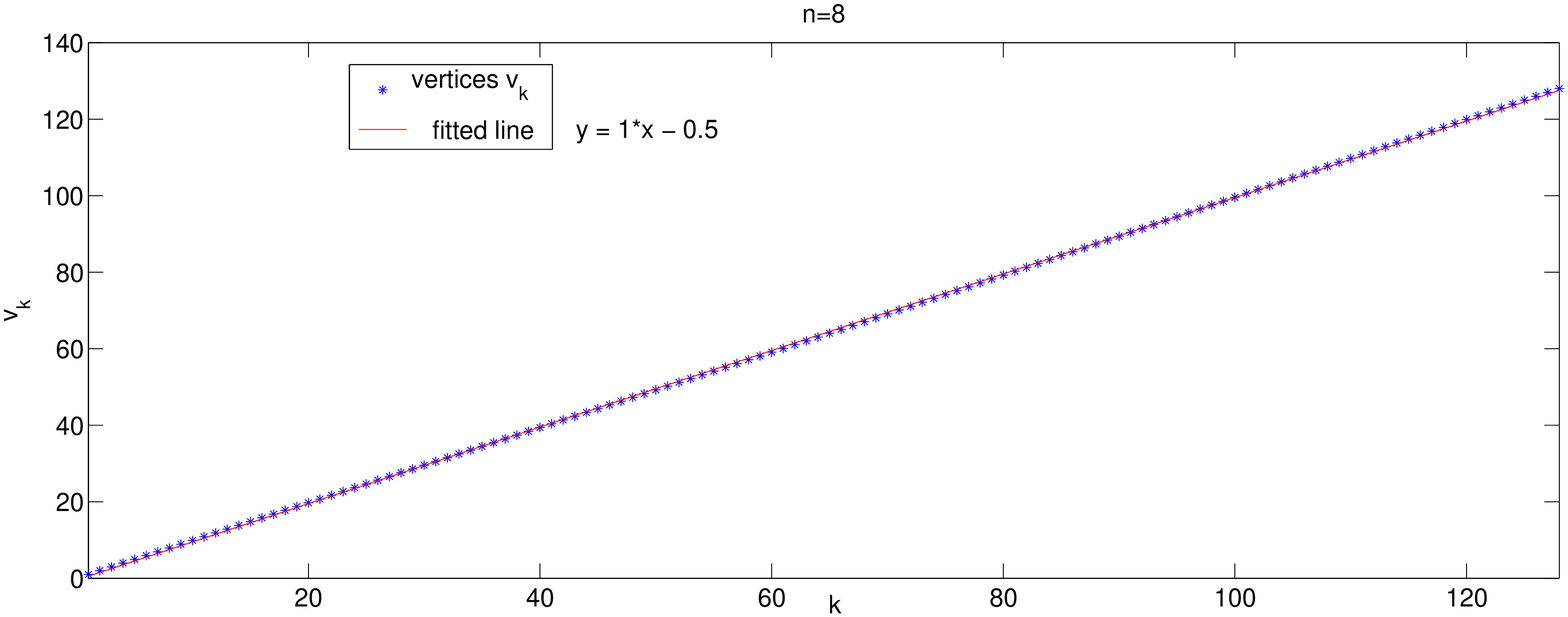} \label{dek8print}  \\
\includegraphics[width=6in]{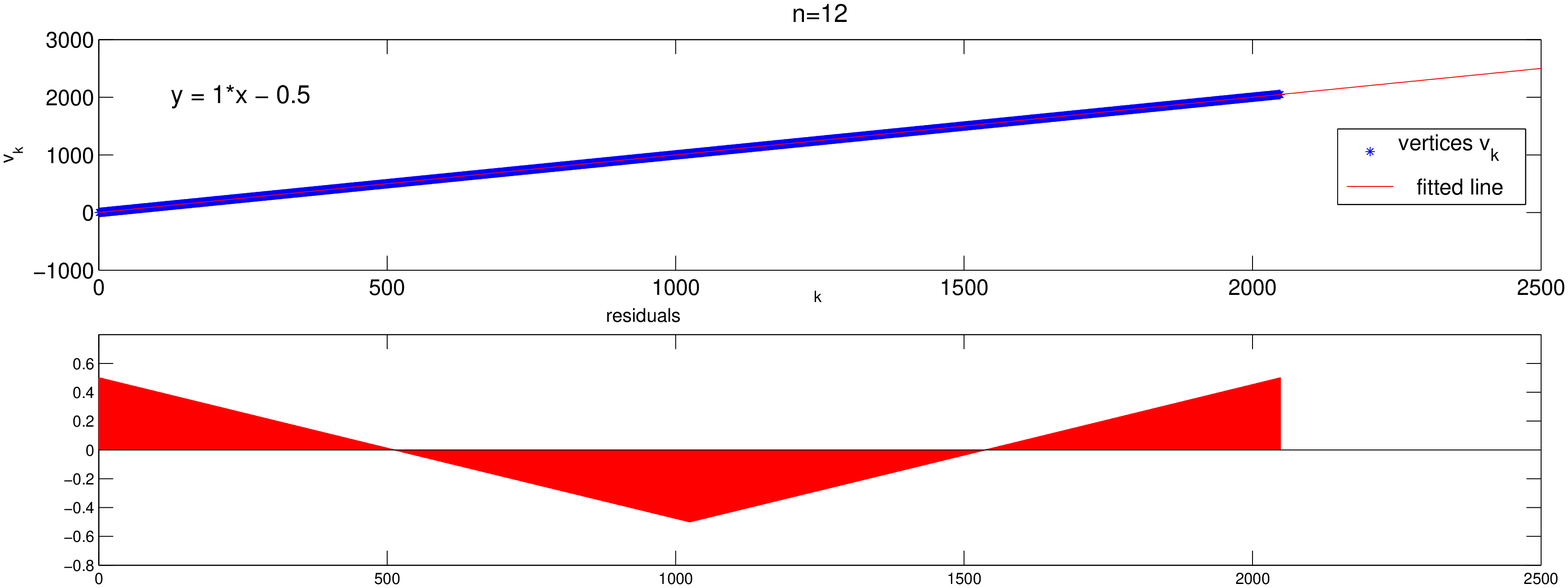} \label{dek12print} 
\end{center}

Note that the above analysis refers to the vertices of a polytope {\bf 1} - CUT($n$). If we denote by $c_k$ the appropriate encoded vertices of CUT($n$), then $c_k= 2^{n \choose 2} - 1 - v_k$.  A corollary of the Theorem \ref{thm} then says that $c_k/2^{{n-1} \choose 2}$ are approximately  on the line $y=-k+2^{n-1}+1/2$.

\bibliographystyle{endm}  
\bibliography{Berncut}

\end{document}